\newtheorem{theorem}{Theorem}
\newtheorem{lemma}{Lemma}
\newtheorem{corollary}{Corollary}
\newtheorem{observation}{Observation}
\newcommand{\Next}{\vartriangleleft}
\newcommand{\prb}[1]{\textnormal{\scshape #1}}
\title{Finding a reconfiguration sequence between longest increasing subsequences\thanks{This work was partially supported by JSPS Kakenhi Grant Numbers
JP20H00595, 
JP21K11752, 
JP22H00513, and 
JP23H03344. 
}}
\author{Yuuki Aoike \and
Masashi Kiyomi \and
Yasuaki Kobayashi \and
Yota Otachi}
\begin{document}

\maketitle

\begin{abstract}
    In this note, we consider the problem of finding a step-by-step transformation between two longest increasing subsequences in a sequence, namely \prb{Longest Increasing Subsequence Reconfiguration}.
    We give a polynomial-time algorithm for deciding whether there is a reconfiguration sequence between two longest increasing subsequences in a sequence.
    This implies that \prb{Independent Set Reconfiguration} and \prb{Token Sliding} are polynomial-time solvable on permutation graphs, provided that the input two independent sets are largest among all independent sets in the input graph.
    We also consider a special case, where the underlying permutation graph of an input sequence is bipartite.
    In this case, we give a polynomial-time algorithm for finding a shortest reconfiguration sequence (if it exists).
\end{abstract}

\section{Introduction}
For a nonnegative integer $n$, we define $[n] = \{1, 2, \ldots, n\}$.
Let $A = (a_i)_{i = 1, 2, \ldots, n}$ be a sequence of distinct integers between $1$ and $n$.
We say that $I \subseteq [n]$ is \emph{feasible} (for $A$) if $a_i < a_j$ for $i, j \in I$ with $i < j$.
In other words, $I$ is the set of indices of an increasing subsequence of $A$.
A \emph{maximum feasible set} (for $A$) is a feasible set $I$ for $A$ such that there is no feasible set (for $A$) with cardinality strictly larger than $I$.
The problem of computing a maximum feasible set of a given sequence $A$, also known as \prb{Longest Increasing Subsequence}, is a typical example that can be solved in polynomial time with dynamic programming~\cite{cormen22}.

In this note, we consider the reconfiguration-variant of \prb{Longest Increasing Subsequence}, defined as follows.
Given a sequence of $n$ distinct integers $A$ and (not necessarily maximum) two feasible sets $I$ and $J$ with $|I| = |J|$, the goal is to determine whether there is a sequence of feasible sets $I_0, I_1, \ldots, I_\ell$ such that $I_0 = I$, $I_\ell = J$, and for $1 \le i \le \ell$, $I_{i}$ is obtained from $I_{i-1}$ by simultaneously adding element $j \notin I_{i-1}$ and removing $k \in I_{i-1}$ (i.e., $I_{i} = (I_{i-1} \cup \{j\}) \setminus \{k\}$).
We call this problem \prb{Increasing Subsequence Reconfiguration} and such a sequence \emph{a reconfiguration sequence between $I$ and $J$}.
If two input sets are maximum feasible sets for $A$, we particularly call the problem \prb{Longest Increasing Subsequence Reconfiguration}.
In this paper, we give a polynomial-time algorithm for \prb{Longest Increasing Subsequence Reconfiguration}.

\begin{theorem}\label{thm:main}
    \prb{Longest Increasing Subsequence Reconfiguration} can be solved in polynomial time.
\end{theorem}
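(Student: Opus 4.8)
The plan is to exploit the rigid structure of longest increasing subsequences. Let $L$ denote the length of a longest increasing subsequence of $A$, and for each index $i$ let $f(i)$ (respectively $g(i)$) be the length of a longest increasing subsequence ending (respectively beginning) at $i$. An index $i$ lies in \emph{some} maximum feasible set precisely when $f(i)+g(i)-1=L$, and the indices with this property split into layers $V_t=\{\,i : f(i)=t,\ g(i)=L-t+1\,\}$ for $t=1,\dots,L$. Two observations do most of the work. First, because $|I|=|J|=L$ and every set appearing in a reconfiguration sequence is feasible of size $L$, every such set is itself a maximum feasible set, hence — reading it in index order — consists of exactly one index from each $V_t$. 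Second, if $i<j$ and $i,j\in V_t$ then $a_i>a_j$, so each $V_t$ is a ``descending block''; and since both orders $i<j$ and $a_i<a_j$ are transitive, a family $x_1\in V_1,\dots,x_L\in V_L$ forms a maximum feasible set if and only if every consecutive pair $x_t,x_{t+1}$ is an increasing pair. In particular a single reconfiguration step can neither introduce an index outside $\bigcup_t V_t$ nor move a token between two different layers (that would leave some layer empty), so every step replaces one element of some $V_t$ by another element of the \emph{same} $V_t$. The problem thus becomes: transform one ``transversal of $V_1,\dots,V_L$ with consecutive compatibility'' into another, changing a single coordinate per step.

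Next I would pin down the compatibility relation between consecutive layers. Order $V_t$ by index as $v_{t,1}<\dots<v_{t,m_t}$, equivalently by strictly decreasing value $a_{v_{t,1}}>\dots>a_{v_{t,m_t}}$. For a fixed $v_{t,p}$ the set of $q$ with $v_{t,p}$ compatible with $v_{t+1,q}$ is an interval $[\ell_t(p),r_t(p)]$ of $[m_{t+1}]$, and monotonicity of indices and values forces both $\ell_t$ and $r_t$ to be non-decreasing in $p$; symmetrically, for a fixed $v_{t+1,q}$ the set of compatible $p$ is an interval of $[m_t]$. A useful consequence: in a valid transversal, the set of values a single coordinate may take with its (at most two) neighbours held fixed is an interval, so any legal change of one coordinate decomposes into legal changes that move that coordinate by $\pm1$. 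Hence it suffices to analyse connectivity of the valid transversals under these unit moves, i.e.\ connectivity of the ``layer-respecting paths'' in the layered DAG on $\bigcup_t V_t$ whose arcs join compatible elements of consecutive layers.

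The core of the argument, and where I expect the real difficulty, is to decide this connectivity in polynomial time. The route I would follow is to attach to each connected component a canonical representative via a greedy ``push-left'' rewriting: repeatedly sweep $t=1,\dots,L$ and, whenever $x_t$ can be decreased by one while staying compatible with $x_{t-1}$ and $x_{t+1}$, do so; since $\sum_t x_t$ strictly drops at each step, this halts after $O(n)$ moves, and by the unit-move reduction it is a sub-relation of the reconfiguration relation whose symmetric closure generates exactly the reconfiguration equivalence. One then aims to prove (i) this rewriting is confluent, so the normal form $\widehat I$ reached from $I$ depends only on the component of $I$, and (ii) $\widehat I$ can be computed directly rather than by exploring the component. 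Granting (i) and (ii), two maximum feasible sets $I,J$ are reconfigurable to one another exactly when $\widehat I=\widehat J$: indeed $I$ reconfigures to $\widehat I$ by construction, reconfiguration is symmetric, and a terminating, locally confluent rewriting has the Church--Rosser property, so $I$ and $J$ are joined by reconfiguration moves iff they share a normal form. The obstacle is (i): a push-left move at coordinate $t$ shifts the interval of $V_{t+1}$-elements compatible with $x_t$ to the left, hence it may temporarily block the push-left move at coordinate $t+1$, and one must recover local confluence by a case analysis that uses the monotonicity of $\ell_t$ and $r_t$ to exhibit, whenever two adjacent pushes interfere, a reordering or an intermediate value that closes the diagram. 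With confluence in hand, computing $\widehat I$ becomes a left-to-right sweep maintaining, for each layer, the least value still consistent with a valid completion of the fixed prefix; and all remaining ingredients — computing $f$ and $g$, the layers $V_t$, the staircase data $\ell_t,r_t$, and the two normal forms — are clearly polynomial, which establishes \Cref{thm:main}.
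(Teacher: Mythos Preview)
Your plan is essentially the paper's proof: your layers $V_t$ are its patience-sorting piles (restricted to indices that lie in some longest increasing subsequence), your push-left rewriting is its relation $\Next$ (restricted to unit steps, which your interval observation justifies), and your confluence-then-compare-normal-forms scheme is precisely the content of its Lemmas~1--3. The local confluence you flag as the main obstacle is in fact immediate and is the paper's Lemma~1 specialised to unit moves: if push-left is available at two coordinates the pushes commute, since for adjacent coordinates $t,t{+}1$ one has $v_{t,x_t-1}<v_{t,x_t}<v_{t+1,x_{t+1}-1}$ and $a_{v_{t,x_t-1}}<a_{v_{t+1,x_{t+1}}}<a_{v_{t+1,x_{t+1}-1}}$, so $(x_t{-}1,x_{t+1}{-}1)$ is compatible and the diamond closes directly.
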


\prb{Increasing Subsequence Reconfiguration} can be seen as a special case of a well-studied reconfiguration problem, called \prb{Independent Set Reconfiguration}.
Given a graph $G = (V, E)$ and two independent sets $I, J$ of $G$ with $|I| = |J|$, \prb{Independent Set Reconfiguration} asks whether there is a sequence of independent sets $I_0, I_1, \ldots, I_\ell$ such that $I_0 = I$, $I_\ell = J$, and for $1 \le i \le \ell$, $I_{i} \setminus I_{i - 1} = \{v\}$ and $I_{i - 1} \setminus I_{i} = \{u\}$ for some $u, v \in V$.
\prb{Increasing Subsequence Reconfiguration} corresponds to \prb{Independent Set Reconfiguration} on permutations graphs: An undirected graph $G = (V, E)$ with $V = [n]$ is called a \emph{permutation graph} if there is a permutation $\pi\colon [n] \to [n]$ such that for $1 \le i < j \le n$, $\pi(i) > \pi(j)$ if and only if $\{i, j\} \in E$.
Observe that for $I \subseteq V$, $I$ is an independent set of the permutation graph $G$ if and only if $I$ is a feasible set for $A = (\pi(i))_{i = 1, 2, \ldots, n}$.
Thus, our problem, \prb{Increasing Subsequence Reconfiguration}, is equivalent to \prb{Independent Set Reconfiguration} on permutation graphs.
\prb{Token Sliding} is a variant of \prb{Independent Set Reconfiguration}, where two vertices $u, v$ in the above definition are required to be adjacent in $G$.
It is easy to see that if $I$ and $J$ are maximum interdependent sets of $G$, these two problems are equivalent.

\begin{corollary}
    \prb{Independent Set Reconfiguration} and \prb{Token Sliding} can be solved in polynomial time, provided that the input graph $G$ is a permutation graph and two sets $I$ and $J$ are maximum independent sets of $G$.
\end{corollary}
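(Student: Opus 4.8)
The plan is to observe that the corollary is, in essence, a restatement of \cref{thm:main} via the dictionary between permutation graphs and sequences, together with one short argument identifying \prb{Token Sliding} with \prb{Independent Set Reconfiguration} in the regime of maximum independent sets.

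First I would dispatch \prb{Independent Set Reconfiguration}. Given a permutation graph $G = (V, E)$, compute in polynomial time a permutation $\pi\colon [n] \to [n]$ realizing $G$ in the sense of the definition above; this is possible because permutation graphs can be recognized and a realizer produced in polynomial time (e.g., via modular decomposition). Set $A = (\pi(i))_{i = 1, 2, \ldots, n}$. By the observation recorded in the excerpt, the independent sets of $G$ are exactly the feasible sets for $A$. Moreover, every reconfiguration step removes one element and adds one, so each set $I_i$ appearing in a reconfiguration sequence satisfies $|I_i| = |I| = |J|$; since $I$ and $J$ are maximum independent sets, this common value is the independence number $\alpha(G)$, and hence each $I_i$ is itself a maximum independent set of $G$, i.e., a maximum feasible set for $A$. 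Therefore a reconfiguration sequence between $I$ and $J$ for \prb{Independent Set Reconfiguration} on $G$ is precisely a reconfiguration sequence between $I$ and $J$ for \prb{Longest Increasing Subsequence Reconfiguration} on $A$, and \cref{thm:main} resolves this case.

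Second I would show that for \emph{any} graph $G$ and maximum independent sets $I, J$, the instances $(G, I, J)$ of \prb{Token Sliding} and of \prb{Independent Set Reconfiguration} have the same answer; combined with the previous paragraph this finishes the corollary. The crux is the claim that if $I'$ and $I''$ are maximum independent sets of $G$ with $I'' = (I' \setminus \{u\}) \cup \{v\}$ for distinct $u, v$, then $\{u, v\} \in E$. Indeed, were $u$ and $v$ non-adjacent, then, $I''$ being independent, $v$ would also be non-adjacent to every vertex of $I' \setminus \{u\}$, so $I' \cup \{v\}$ would be an independent set of size $\alpha(G) + 1$, a contradiction. Hence every token-jumping move between maximum independent sets is automatically a token-sliding move, and since (as noted above) all sets along a reconfiguration sequence between two maximum independent sets are themselves maximum, the two problems admit exactly the same reconfiguration sequences.

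The entire substance lies in \cref{thm:main}; the corollary needs only the two bookkeeping steps above, the single non-immediate point being the adjacency claim identifying \prb{Token Sliding} with \prb{Independent Set Reconfiguration}, which as shown is a one-line consequence of maximality. I therefore do not anticipate a real obstacle here — the only thing to state with care is that a permutation realizer of $G$ is computable in polynomial time, so that the reduction to sequences is fully constructive.
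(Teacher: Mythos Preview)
Your proposal is correct and matches the paper's approach: the corollary is stated there without proof, as an immediate consequence of \cref{thm:main} via the dictionary between permutation graphs and sequences together with the remark that \prb{Token Sliding} and \prb{Independent Set Reconfiguration} coincide on maximum independent sets. Your write-up simply fills in the routine details (computing a realizer, the one-line adjacency argument) that the paper leaves implicit.
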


This graph-theoretic perspective of \prb{Longest Increasing Subsequence Reconfiguration} gives another interesting consequence of finding a \emph{shortest} reconfiguration sequence between maximum independent sets on bipartite permutation graphs.
For any reconfiguration sequence $(I_0, I_1, \ldots, I_\ell)$, we have $\ell \ge |I_0 \setminus I_\ell|$, as we can remove at most one element from $I_0 \setminus I_\ell$ in a single step.
For bipartite permutation graphs, we can always find a reconfiguration sequence between maximum independent sets $I$ and $J$ with length $\ell = |I \setminus J|$ if there is a reconfiguration sequence between them.

\begin{theorem}\label{thm:bp}
    Let $G$ be a bipartite permutation graph and let $I$ and $J$ be maximum independent sets of $G$.
    Suppose that there is a reconfiguration sequence between $I$ and $J$.
    Then, there is a reconfiguration sequence of length $|I \setminus J|$ between $I$ and $J$.
\end{theorem}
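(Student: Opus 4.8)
The plan is to translate the problem into the sequence picture and reduce it to a statement about shortest paths in an induced subgraph of a Boolean cube. Write $A = (a_i)_{i\in[n]}$ for a sequence realizing the permutation graph $G$, so that maximum independent sets of $G$ are exactly maximum feasible sets of $A$; as $G$ is bipartite, it is in particular triangle-free. Put $k = |I| = |J|$; since $I$ and $J$ are maximum, every feasible set in any reconfiguration sequence also has size $k$, so we stay among maximum feasible sets throughout. For $i \in [n]$ let $f(i)$ and $g(i)$ be the lengths of a longest increasing subsequence of $A$ ending at $i$ and starting at $i$, and let $V^\ast = \{ i : f(i) + g(i) = k+1 \}$ be the set of indices lying in some maximum feasible set. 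Group $V^\ast$ into levels $L_\ell = \{ i \in V^\ast : f(i) = \ell \}$ for $\ell \in [k]$. I would establish three facts: (i) each $L_\ell$ induces a clique of $G$ — if $i < j$ lie in the same level then $a_i > a_j$, since otherwise a longest increasing subsequence ending at $i$ would extend past $j$ — so $|L_\ell| \le 2$ by triangle-freeness; (ii) a set of indices is a maximum feasible set if and only if it picks exactly one element of each $L_\ell$ and is feasible; (iii) a single reconfiguration step between maximum feasible sets replaces the chosen element of one level by the other element of that same level (deleting the element of level $\ell$ forces adding an element of level $\ell$, since two elements of one level are adjacent).

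Let $D = \{\ell : |L_\ell| = 2\}$ and, for $\ell \in D$, call the two elements of $L_\ell$ the values $0$ and $1$; by (ii) this identifies the maximum feasible sets with a subset $S \subseteq \{0,1\}^D$. Feasibility of a choice of one element per level amounts to forbidding, for each pair $\ell \ne m$ of levels, those value-patterns on coordinates $\ell, m$ whose two chosen indices are adjacent in $G$; hence $S$ is exactly the solution set of a $2$-SAT instance, each clause forbidding one value-pattern on at most two coordinates. By (iii), the reconfiguration graph on maximum feasible sets is precisely the subgraph $H$ of $\{0,1\}^D$ induced on $S$, whose edges are single-coordinate flips; and $|I \setminus J|$ is the Hamming distance between the elements of $S$ corresponding to $I$ and $J$. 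Since a step changes the Hamming distance to $J$ by at most one, it suffices to prove the exchange lemma: if $x, y \in S$ are distinct and lie in one connected component of $H$, then some coordinate $\ell$ with $x_\ell \ne y_\ell$ can be flipped in $x$ to the value $y_\ell$ while remaining in $S$. Granting it, induction on the Hamming distance produces a reconfiguration sequence of length $|I \setminus J|$, as the flipped point lies in $S$, in the same component as $x$ and hence $y$, and nearer to $y$.

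The exchange lemma is the crux, and I would prove it by contradiction. Let $\Delta = \{\ell : x_\ell \ne y_\ell\}$ and suppose that for every $\ell \in \Delta$, flipping coordinate $\ell$ of $x$ to $y_\ell$ leaves $S$; then, as $x \in S$, some clause mentioning coordinate $\ell$ is newly violated, and a brief check using $x, y \in S$ forces that clause to have the form ``forbid that coordinate $\ell$ equals $y_\ell$ and coordinate $m$ equals $x_m$'' for some $m \ne \ell$; since $y$ satisfies this clause, $y_m \ne x_m$, i.e.\ $m \in \Delta$. This yields a map $\ell \mapsto m$ on the finite set $\Delta$ with no fixed point, hence a directed cycle $\ell_1 \to \ell_2 \to \cdots \to \ell_p \to \ell_1$ with $p \ge 2$. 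The associated clauses say every $z \in S$ satisfies ``$z_{\ell_i} = y_{\ell_i}$ implies $z_{\ell_{i+1}} \ne x_{\ell_{i+1}}$'', which — because $\ell_{i+1} \in D$ has only the two values $x_{\ell_{i+1}}$ and $y_{\ell_{i+1}}$, the single place where bipartiteness (via $|L_\ell| \le 2$) is essential — is the same as ``$z_{\ell_i} = y_{\ell_i}$ implies $z_{\ell_{i+1}} = y_{\ell_{i+1}}$''. Following the cycle, every $z \in S$ agrees with $y$ on all of $\ell_1, \dots, \ell_p$ or with $x$ on all of them, and a single-coordinate flip cannot move between these alternatives; so $x$ and $y$ would lie in distinct components of $H$, contradicting the hypothesis. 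I expect the clause bookkeeping to be the fiddly part; the combinatorial heart is the short observation that a cycle in $\ell \mapsto m$ forces agreement, which relies squarely on levels having size at most two.
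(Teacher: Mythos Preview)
Your proof is correct and follows a genuinely different route from the paper's. Both arguments rest on the same level decomposition (the paper's piles), each level having size at most two by triangle-freeness, and both reduce the theorem to an exchange step that decreases $|I\setminus J|$ by one. The paper's version is concrete: it defines a \emph{forbidden pair} (two mixed piles whose four elements induce a $C_4$), observes that a forbidden pair blocks reconfiguration outright, and then proves that in its absence the \emph{leftmost} mixed pile admits a swap either in $I$ or in $J$, via a short chain of inequalities that exploits the linear order of the piles. Your version is abstract: you encode the maximum feasible sets as the solution set $S\subseteq\{0,1\}^D$ of a $2$-SAT instance and prove directly, by extracting a cycle in an implication map, that any two connected points of $S$ admit a single-bit flip of one towards the other. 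What the paper's approach buys is an explicit local characterisation of reconfigurability (absence of a $C_4$ on two mixed piles); indeed, your $2$-cycles are precisely their forbidden pairs. What your approach buys is generality and one-sided progress: you never touch the ordering of the levels or any inequality specific to permutation graphs, and your exchange lemma is really the statement that the solution graph of \emph{any} $2$-SAT instance is an isometric subgraph of its ambient hypercube, with the bipartite-permutation case being one instantiation.
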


The proof of \Cref{thm:bp} implies a polynomial-time algorithm for the ``shortest-sequence variant'' of \prb{Longest Increasing Subsequence Reconfiguration} when the underlying permutation graph of an input sequence $A$ is restricted to be bipartite.

\paragraph*{Related work} \prb{Independent Set Reconfiguration} and \prb{Token Sliding} are both known to be PSPACE-complete~\cite{BelmonteKLMOS21,BonsmaC:TCS:Finding:2009,HearnD05,KaminskiMM12} and studied on many graph classes.
\prb{Independent Set Reconfiguration} is solvable in polynomial time on even-hole free graphs~\cite{KaminskiMM12} and cographs~\cite{BonamyB14a,Bonsma16}, while it is NP-complete on bipartite graphs~\cite{LokshtanovM19}.
\prb{Token Sliding} is solvable in polynomial time on cographs~\cite{KaminskiMM12}, bipartite permutation graphs~\cite{Fox-EpsteinHOU15}, and interval graphs~\cite{BonamyB17,BrianskiFHM21}, while it is PSPACE-complete on split graphs~\cite{BelmonteKLMOS21} and bipartite graphs~\cite{LokshtanovM19}.
The result of \cite{Fox-EpsteinHOU15} does not yield \Cref{thm:bp} since their polynomial-time algorithm may provide a non-shortest reconfiguration sequence on bipartite permutation graphs.
We particularly emphasize that both reconfiguration problems remain PSPACE-complete even if two input independent sets are maximum independent sets of the input graph~\cite{BonsmaC:TCS:Finding:2009}.

As mentioned above, \prb{Independent Set Reconfiguration} can be solved in polynomial time on the class of even-hole free graphs.
In fact, for even-hole free graphs, Kami\'{n}ski et al.~\cite{KaminskiMM12} showed that every instance of \prb{Independent Set Reconfiguration} is a yes-instance (assuming that two input independent sets have the same cardinality).
This phenomenon does not hold on the class of permutation graphs: The instance consisting of $G \coloneqq K_{2, 2}$ with two color classes $I$ and $J$ is a no-instance, and $G$ is indeed a permutation graph, corresponding to sequence $A = (7,8,5,6)$.\footnote{For elements in $A$, we rather use integers more than $n$ to distinguish from their indices in some concrete examples.}
Also both $I = \{1, 2\}$ and $J = \{3, 4\}$ are maximum independent sets of $G$.
Thus, it is non-trivial to design a polynomial-time algorithm for \prb{Longest Increasing Subsequence Reconfiguration}.
Our polynomial-time algorithm exploits a structural property of the set of feasible sets for a given sequence $A$.

\section{Algorithm}

Let $A = (a_i)_{i = 1, 2, \ldots, n}$ be a sequence of $n$ distinct integers between $1$ and $n$.
Let $V = [n]$ and let $P = (V, \preceq_A)$ be a partial order on $V$ such that for $i, j \in V$,
\begin{align*}
    i \preceq_A j \iff (i = j) \lor (i < j \land a_i < a_j). 
\end{align*}
Then, a subset of $[n]$ is feasible for $A$ if and only if it is a chain of this partial order.
Moreover, by Mirsky's theorem~\cite{Mirsky71}, the largest size of a chain of $P$ is equal to the minimum size of an antichain partition of $V$, and such a partition can be computed in $O(n \log n)$ time by a standard dynamic programming algorithm for the longest increasing subsequence problem (see~\cite{Fredman:computing:DM:1975} for example).

To understand the structure of a minimum antichain partition, we use a specific construction, called \emph{patience sorting}~\cite{AldousD:Longest:BAMS:1999}, which is briefly described as follows.
For simplicity, we add $0$ to $A$ as $a_0 = 0$.
We use $n + 1$ piles $P_0, P_1, \ldots, P_n$ that are initially all empty and iteratively put an integer $a_i$ in $A$ on the top of one of the piles for $0 \le i \le n$ in this order.
For each $0 \le i \le n$, we put $a_i$ on the top of the ``leftmost'' pile $P_j$ such that $P_j$ is empty or the top of $P_j$ is greater than $a_i$.
Let us note that the top elements of all nonempty piles are always sorted in increasing order.
Now, let $P_0, P_1, \ldots, P_n$ be the piles obtained by executing the above algorithm for $A$.
Clearly, $P_0$ only contains $a_0$.
For each pile $P_i$, observe that $P_i$ is an antichain (with respect to $\preceq_A$): If $a_i$ is placed below $a_j$ in the pile, then $i < j$ and $a_i > a_j$.
For each $1 \le i \le n$, when $a_i$ is placed on the top of $P_k$ for some $1 \le k \le n$, the top element $a_j$ of $P_{k - 1}$ is smaller than $a_i$ (i.e., $a_j < a_i$).
In this case, we say that \emph{$a_j$ blocks $a_i$} and \emph{$a_i$ is blocked by $a_j$}.
Let $k$ be the largest index of a nonempty pile.
By definition, for $1 \le i \le n$, each $a_i$ has a unique element $a_j$ that blocks $a_i$.
Moreover, if $a_i$ is blocked by $a_j$, we have $a_j \preceq_A a_i$.
This implies that there is a chain (with respect to $\preceq_A$) of size $k + 1$, which corresponds to a feasible set $I$ for $A$.
As each $P_i$ is an antichain, this chain contains exactly one element of $P_i$ for each $0 \le i \le k$.
Thus, $I$ is a maximum feasible set for $A$.
The above construction of piles further implies the following observations.

\begin{observation}\label{obs:inc}
    Let $I$ be an arbitrary maximum feasible set for $A$.
    \begin{enumerate}
        \item If $a_v$ is placed below $a_u$ in a pile $P_i$, then we have $u > v$ and $a_u < a_v$.
        \item Each pile $P_i$ contains exactly one element $a_u$ with $u \in I$.
        \item Let $u, v \in I$ such that $a_u \in P_i$ and $a_v \in P_j$ for $0 \le i < j \le k$. Then, we have $u \preceq_A v$.
    \end{enumerate}
\end{observation}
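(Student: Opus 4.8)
The plan is to read off all three items from the patience-sorting construction; items~1 and~2 are quick, and essentially all of the work sits in item~3. For item~1, I would use that patience sorting always places a new element on the \emph{top} of a pile: if $a_v$ lies below $a_u$ in $P_i$ then $a_v$ was placed at an earlier step, so $v < u$; and since $a_u$ was placed on the nonempty pile $P_i$, the selection rule forces the current top of $P_i$ to exceed $a_u$, which by a one-line induction over the construction shows that within each pile the values strictly decrease from bottom to top, giving $a_u < a_v$. Along the way I would record two auxiliary facts that item~3 will need: the top elements of the nonempty piles are always increasing from left to right (this is already observed in the construction), and the nonempty piles always form a prefix $P_0, P_1, \dots, P_m$, because an empty pile is always a valid target and hence the algorithm never skips one.

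For item~2, recall that patience sorting realizes Mirsky's theorem~\cite{Mirsky71}: the piles $P_0, \dots, P_k$ partition the ground set into $k+1$ antichains, and $k+1$ is the largest size of a chain, so every feasible set has size at most $k+1$ and a maximum one attains it. Since $I$ is a chain and each $P_i$ is an antichain, $|I \cap P_i| \le 1$ for every $i$; as the piles partition the ground set, $\sum_i |I \cap P_i| = |I| = k+1$, forcing $|I \cap P_i| = 1$ for every $i$.

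For item~3, the crux is the claim that $\preceq_A$ respects pile order: if $a_w \prec_A a_x$ (that is, $w < x$ and $a_w < a_x$), then the pile containing $a_w$ has strictly smaller index than the one containing $a_x$. Granting this, item~3 follows immediately: for $u, v \in I$ with $a_u \in P_i$, $a_v \in P_j$, $i < j$, the elements $u$ and $v$ are $\preceq_A$-comparable (since $I$ is a chain and they lie in distinct piles), and $v \prec_A u$ would force $v$ into a pile with smaller index than $u$, contradicting $i < j$; hence $u \preceq_A v$. To prove the claim I would look at the moment $a_x$ is placed: since $w < x$, the element $a_w$ already sits in its pile $P_p$, so by item~1 the top of $P_p$ is at most $a_w$, hence less than $a_x$; by the prefix fact, $P_0, \dots, P_p$ are all nonempty, and by the left-to-right monotonicity of tops each of their tops is at most that of $P_p$, hence less than $a_x$. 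Thus none of $P_0, \dots, P_p$ is a valid target for $a_x$, so $a_x$ is placed in some $P_q$ with $q > p$. The only genuine obstacle here is pinning down this claim together with the prefix invariant it rests on; the rest is bookkeeping directly on top of the construction that has already been set up.
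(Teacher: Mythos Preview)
Your proposal is correct and follows essentially the same approach as the paper: all three items are read off directly from the patience-sorting construction, with item~3 resting on the monotonicity of pile tops together with item~1. The only cosmetic difference is that you first isolate the general claim ``$w \prec_A x$ implies $w$ lands in a strictly earlier pile than $x$'' and then specialize to $u,v\in I$, whereas the paper argues the contrapositive directly for the pair $u,v$ by looking at the moment $a_u$ is placed; both arguments use the same prefix invariant and the same inequality chain.
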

\begin{proof}
    The first statement follows from the construction of $P_i$.
    The second statement follows from the fact that $P_i$ is an antichain with respect to $\preceq_A$.
    For the third statement, it suffices to show that $u < v$ (as the feasibility of $I$ implies that $a_u < a_v$).
    Suppose for contradiction that $u > v$.
    When $a_u$ is placed on the top of $P_i$, the top element on a pile $P_{j}$ is strictly larger than $a_u$.
    This and the first statement together imply that $a_u < a_v$, contradicting the fact that $I$ is feasible.
\end{proof}

Now, we turn to \prb{Longest Increasing Subsequence Reconfiguration}.
Let 
\begin{align*}
    \mathcal I = \{I \subseteq \{0\} \cup [n] : I \text{ is a maximum feasible set of } A\}    
\end{align*}
and let $P_0, P_1, \ldots, P_k$ be the nonempty piles that are obtained by applying the above algorithm to $A = (a_i)_{i = 0, 1, \ldots, n}$ with $a_0 = 0$.
The following observation follows from (2) in \Cref{obs:inc}.

\begin{observation}\label{obs:pile}
    Let $I, J \in \mathcal I$ such that $I \setminus J = \{u\}$ and $J \setminus I = \{v\}$.
    Then, $a_u, a_v \in P_j$ for some $0 \le j \le k$.
\end{observation}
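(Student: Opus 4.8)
The plan is to argue by a simple counting of pile occupancies, using only part~(2) of \Cref{obs:inc}. First I would record that, since $I$ and $J$ are maximum feasible sets, part~(2) of \Cref{obs:inc} gives $|I| = |J| = k+1$, with exactly one index of $I$ and exactly one index of $J$ in each pile $P_0, \dots, P_k$. Next, let $C \coloneqq I \cap J$. From $I \setminus J = \{u\}$ and $J \setminus I = \{v\}$ we get $|C| = k$, and since $C \subseteq I$, part~(2) applied to $I$ shows that the elements of $C$ lie in $k$ \emph{distinct} piles; hence there is exactly one pile $P_j$ that contains no index of $C$.

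It then remains to identify the $I$-element and the $J$-element of this exceptional pile $P_j$. Applying part~(2) of \Cref{obs:inc} to $I$, the unique index of $I$ lying in $P_j$ is not in $C$, so it must be $u$, i.e.\ $a_u \in P_j$. Symmetrically, applying part~(2) to $J$, the unique index of $J$ in $P_j$ is not in $C$, so it must be $v$, i.e.\ $a_v \in P_j$. This is exactly the asserted conclusion. There is no real obstacle: the whole statement is a bookkeeping consequence of \Cref{obs:inc}(2), the only point worth stating explicitly being that distinct elements of a maximum feasible set occupy distinct piles, which is precisely what part~(2) guarantees.
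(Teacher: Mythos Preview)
Your argument is correct and is precisely the counting argument the paper has in mind: the paper does not give a detailed proof but simply states that the observation ``follows from (2) in \Cref{obs:inc},'' and your proposal spells out exactly that implication. There is nothing to add.
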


Our algorithm for \prb{Longest Increasing Subsequence Reconfiguration} is based on a certain equivalence relation on $\mathcal I$.
For $I, J \in \mathcal I$,  we denote by $I \Next J$ if $I \setminus J = \{u\}$ and $J \setminus I = \{v\}$ such that $a_u$ is placed (strictly) below $a_v$ on pile $P_i$ for some $1 \le i \le k$.
We note that this $\Next$ relation is not transitive: $I \Next I'$ and $I' \Next I''$ may not imply $I \Next I''$.
For $I \in \mathcal I$, a family of feasible sets $\mathcal M(I) \subseteq \mathcal I$ is defined inductively as follows: (1) $M(I)$ contains $I$ and (2) for every $J \in \mathcal M(I)$, $J' \Next J$ implies $J' \in \mathcal M(I)$. 
In other words, $\mathcal M(I)$ is the lower set of $I$ in the transitive closure of $\Next$ in $\mathcal I$.
By definition, for $I \in \mathcal I$, $\mathcal M(J) \subsetneq \mathcal M(I)$ if $J \in \mathcal M(I)$ with $J \neq I$.
We say that $I \in \mathcal I$ is \emph{$\Next$-minimal} if there is no $J \in \mathcal I$ with $J \Next I$.

\begin{lemma}\label{lem:com}
    Let $I, J, J' \in \mathcal I$ such that $J \Next I$, $J' \Next I$, and $J \neq J'$.
    Then, at least one of the following conditions is satisfied: $J' \Next J$, $J \Next J'$, or there is $J'' \in \mathcal I$ such that $J'' \Next J$ and $J'' \Next J'$.
\end{lemma}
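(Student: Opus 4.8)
The plan is to split into two cases according to whether the moves $J \Next I$ and $J' \Next I$ act on the same pile. By \Cref{obs:inc}(2), each of $J, J' \in \mathcal I$ differs from $I$ in exactly one element, so I fix indices $a, b \in \{1, \dots, k\}$ and elements $u, v, u', v'$ such that $J \setminus I = \{u\}$, $I \setminus J = \{v\}$ with $a_u, a_v \in P_a$ and $a_u$ strictly below $a_v$, and $J' \setminus I = \{u'\}$, $I \setminus J' = \{v'\}$ with $a_{u'}, a_{v'} \in P_b$ and $a_{u'}$ strictly below $a_{v'}$; moreover $J$ agrees with $I$ on every pile except $P_a$, and $J'$ agrees with $I$ on every pile except $P_b$.

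First suppose $a = b$. Then $v = v'$, and both $J$ and $J'$ coincide with $I$ away from $P_a$, while on $P_a$ they carry $u$ and $u'$ respectively, each lying strictly below $v$. Since $J \neq J'$ we have $u \neq u'$, so these two elements are strictly ordered by height in $P_a$; whichever lies lower, the definition of $\Next$ gives immediately $J \Next J'$ or $J' \Next J$. This disposes of the first case, and it is the only case in which the hypothesis $J \neq J'$ is used.

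Now suppose $a \neq b$, say $a < b$. Here $|J \setminus J'| = 2$, so neither $J \Next J'$ nor $J' \Next J$ can hold, and the goal is to produce $J'' \in \mathcal I$ with $J'' \Next J$ and $J'' \Next J'$. The natural candidate is $J'' \coloneqq (I \setminus \{v, v'\}) \cup \{u, u'\}$, that is, $I$ with its element on $P_a$ replaced by $u$ and its element on $P_b$ replaced by $u'$. Once $J'' \in \mathcal I$ is established, $J'' \Next J$ is witnessed on $P_b$ (there $J''$ carries $u'$, $J$ carries $v'$, and $a_{u'}$ is strictly below $a_{v'}$) and $J'' \Next J'$ is witnessed on $P_a$; both are then immediate from the definition of $\Next$. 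Since $J''$ meets each of the piles $P_0, \dots, P_k$ in exactly one element, it has size $k + 1$, so it suffices to show that $J''$ is a chain of $\preceq_A$. Listing the elements of $J''$ in pile order, \Cref{obs:inc}(3) shows that each consecutive pair is already comparable (with the correct orientation) within $J$ or within $J'$, except possibly the pair $(u, u')$, which occurs precisely when $b = a + 1$. If $b \geq a + 2$, take any index $i$ with $a < i < b$ and let $c$ be the element that $I$ — and hence both $J$ and $J'$ — places on $P_i$; then $u \preceq_A c$ and $c \preceq_A u'$ by \Cref{obs:inc}(3), so $u \preceq_A u'$ and $J''$ is a chain.

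The remaining subcase $b = a + 1$ is the one I expect to be the main obstacle, because there is no intermediate pile and $u \preceq_A u'$ must be derived from the patience-sorting structure rather than from transitivity. For the index inequality, $u < v$ by \Cref{obs:inc}(1) (since $a_u$ lies below $a_v$ in $P_a$) and $v < u'$ by \Cref{obs:inc}(3) applied to $J'$ (where $v$ sits on $P_a$ and $u'$ on $P_{a+1}$), so $u < u'$. For the value inequality, $a_u < a_{v'}$ by \Cref{obs:inc}(3) applied to $J$ (where $u$ sits on $P_a$ and $v'$ on $P_{a+1}$), while $a_{v'} < a_{u'}$ by \Cref{obs:inc}(1) inside $P_{a+1}$ (since $a_{u'}$ lies below $a_{v'}$), so $a_u < a_{u'}$. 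Hence $u \preceq_A u'$, which makes $J''$ a chain of size $k + 1$ meeting every pile once, i.e.\ $J'' \in \mathcal I$; this completes the last case and the proof.
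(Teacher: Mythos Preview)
Your proof is correct and follows essentially the same route as the paper's: the same case split on whether the two moves hit the same pile, the same candidate $J'' = (I \setminus \{v,v'\}) \cup \{u,u'\}$ in the second case, and the same chain of inequalities (via \Cref{obs:inc}(1) and (3), passing through the elements of $I$ on the neighbouring piles) to establish $u \preceq_A u'$. The only cosmetic differences are that the paper argues feasibility of $J''$ by contradiction and treats all $a < b$ uniformly, whereas you argue directly and split off the easy subcase $b \ge a+2$.
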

\begin{proof}
    Let $I \setminus J = \{u\}$, $J \setminus I = \{v\}$, $I \setminus J' = \{u'\}$, and $J' \setminus I = \{v'\}$.
    If $u$ and $u'$ belong to the same pile $P_i$, by~\Cref{obs:pile}, $v$ and $v'$ belong to the same pile $P_i$.
    This implies either $J' \Next J$ or $J \Next J'$.
    Suppose otherwise.
    By~\Cref{obs:pile}, $v$ and $v'$ belong to distinct piles and hence $v \neq v'$.
    We claim that $(J \setminus \{u'\}) \cup \{v'\}$ is a maximum feasible set, which symmetrically implies that $(J' \setminus \{u\}) \cup \{v\}$ is a maximum feasible set as well.
    Suppose for contradiction that $(J \setminus \{u'\}) \cup \{v'\}$ is not a feasible set.
    Since $J \setminus \{u'\}$ and $J' = (I \setminus \{u'\}) \cup \{v'\}$ are feasible, $v$ and $v'$ are the unique incomparable pair with respect to $\preceq_A$ in $(J \setminus \{u'\}) \cup \{v'\}$.
    We assume that $a_v$ and $a_{v'}$ are contained in piles $P_i$ and $P_j$ with $i < j$, respectively.
    As $v, u' \in J$, by~\Cref{obs:inc}, we have $a_{v} < a_{u'}$.
    Moreover, as $a_{v'}$ is placed below $a_{u'}$ in $P_j$, we have $a_{u'} < a_{v'}$ (by (1) in \Cref{obs:inc}).
    These together imply that $a_{v} < a_{v'}$.
    As $a_v$ is placed below $a_u$ in $P_i$, we have $v < u$ (by (1) in \Cref{obs:inc}).
    Moreover, by (3) in \Cref{obs:inc}, $u \preceq_A v'$ as $u, v' \in J'$.
    Thus, we have $v < v'$, contradicting the assumption that $v$ and $v'$ are incomparable with respect to $\preceq_A$.
\end{proof}

\begin{lemma}\label{lem:minimal}
    For $I \in \mathcal I$, there is exactly one $\Next$-minimal set in $\mathcal M(I)$.
\end{lemma}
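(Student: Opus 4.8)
The plan is to regard $\Next$ as an abstract rewriting relation on the finite set $\mathcal I$ (reading ``$J \Next I$'' as ``$I$ reduces to $J$'') and to run the Newman-lemma scheme: termination together with the local confluence provided by \Cref{lem:com} gives unique normal forms. Existence is the easy half. Since $\mathcal M(I)$ is finite and, by the remark preceding the statement, $\mathcal M(J) \subsetneq \mathcal M(I)$ whenever $J \in \mathcal M(I)$ with $J \neq I$, every $\Next$-descending sequence $I = X_0, X_1, X_2, \dots$ with $X_{t+1} \Next X_t$ has strictly decreasing $|\mathcal M(X_t)|$ and therefore is finite; for its last term $X_s$ there is no $W$ with $W \Next X_s$, so $X_s$ is $\Next$-minimal, and clearly $X_s \in \mathcal M(I)$. (Note that ``$\Next$-minimal set in $\mathcal M(I)$'' coincides with ``minimal element of the lower set $\mathcal M(I)$'', since $W \Next M$ and $M \in \mathcal M(I)$ force $W \in \mathcal M(I)$.) Hence $\mathcal M(I)$ contains at least one $\Next$-minimal set.

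For uniqueness I would establish, by well-founded induction on $\mathcal M(X)$ ordered by strict inclusion (equivalently, on $|\mathcal M(X)|$), the following confluence claim: for every $X \in \mathcal I$, any two $\Next$-descending paths that start at $X$ and end at a $\Next$-minimal set end at the \emph{same} set. If $X$ is itself $\Next$-minimal the claim is vacuous. Otherwise take two such paths, with first steps $X_1 \Next X$ and $Y_1 \Next X$ and endpoints $M$ and $M'$; note that $\mathcal M(X_1) \subsetneq \mathcal M(X)$ and $\mathcal M(Y_1) \subsetneq \mathcal M(X)$, so the induction hypothesis applies at $X_1$ and at $Y_1$. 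If $X_1 = Y_1$, apply it at $X_1$ to the two tails. If $X_1 \neq Y_1$, invoke \Cref{lem:com}. If $Y_1 \Next X_1$ (and symmetrically if $X_1 \Next Y_1$), then the tail of the first path and the path $X_1, Y_1, Y_2, \dots, M'$ are two $\Next$-descending paths out of $X_1$ ending at $\Next$-minimal sets, so the induction hypothesis at $X_1$ yields $M = M'$. If instead there is $Z$ with $Z \Next X_1$ and $Z \Next Y_1$, fix any maximal $\Next$-descending path from $Z$, ending at some $\Next$-minimal $N$; the induction hypothesis at $X_1$ (applied to the first tail and to $X_1, Z, \dots, N$) gives $M = N$, and at $Y_1$ it gives $M' = N$. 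In every case $M = M'$, which proves the claim. Now apply the claim with $X = I$: since, by the inductive definition of $\mathcal M(I)$, every element of $\mathcal M(I)$ --- in particular every $\Next$-minimal one --- is the endpoint of some $\Next$-descending path starting at $I$ (which is then necessarily maximal), all $\Next$-minimal sets of $\mathcal M(I)$ coincide.

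The step I expect to be the main obstacle is precisely the propagation in the previous paragraph: \Cref{lem:com} supplies only \emph{local} confluence, with an inconvenient extra step in one of its branches, so a direct ``diamond'' closure is impossible, and the well-founded induction on $\mathcal M(X)$ is what makes local confluence imply the global statement. The remaining ingredients are routine: identifying $\Next$-minimality within $\mathcal M(I)$ with minimality in the lower set, and the easy induction from the definition of $\mathcal M$ showing that every member of $\mathcal M(I)$ is reachable from $I$ along a $\Next$-descending path.
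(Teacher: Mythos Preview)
Your proof is correct and follows essentially the same route as the paper: both argue by well-founded induction on $|\mathcal M(X)|$ and appeal to \Cref{lem:com} to merge the two branches below a non-minimal $X$. Your framing via Newman's lemma (termination plus local confluence gives unique normal forms) makes the structure explicit, while the paper states the induction directly on the uniqueness of the $\Next$-minimal set; the substance of the argument is the same.
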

\begin{proof}
    We prove the lemma by induction on $|\mathcal M(I)|$.
    If $|\mathcal M(I)| = 1$, then $I$ itself is the unique $\Next$-minimal set in $\mathcal M(I)$.
    Suppose that $\mathcal M(I)$ contains at least two sets.
    If there is exactly one $J \in \mathcal M(I)$ with $J \Next I$, by the induction hypothesis, $\mathcal M(J) \subsetneq \mathcal M(I)$ has a unique $\Next$-minimal set, which is also the unique $\Next$-minimal set in $\mathcal M(I)$.
    Otherwise, there are two $J, J' \in \mathcal M(I)$ such that $J \Next I$ and $J' \Next I$.
    By~\Cref{lem:com}, at least one of the following conditions are satisfied: $J' \Next J$, $J \Next J'$, or there is $J'' \in \mathcal I$ such that $J'' \Next J$ and $J'' \Next J'$.
    If $J' \Next J$, then $\mathcal M(J') \subseteq \mathcal M(J) \subsetneq \mathcal M(I)$.
    By induction, both $\mathcal M(J)$ and $\mathcal M(J')$ have unique $\Next$-minimal sets, and as $\mathcal M(J') \subseteq \mathcal M(J)$, these two sets are identical.
    The case where $J \Next J'$ is symmetric.
    Hence, suppose that there is $J'' \in \mathcal I$ such that $J'' \Next J$ and $J'' \Next J'$.
    By induction, $\mathcal M(J)$, $\mathcal M(J')$, and $\mathcal M(J'')$ have unique $\Next$-minimal sets.
    Similarly, as $\mathcal M(J'') \subseteq \mathcal M(J)$ and $\mathcal M(J'') \subseteq \mathcal M(J')$, these three $\Next$-minimal sets are identical, which completes the proof.
\end{proof}

The proof of \Cref{lem:minimal} immediately implies the following corollary.

\begin{corollary}\label{cor:minimal}
    For $I, J \in \mathcal I$ with $I \Next J$, the $\Next$-minimal sets of $\mathcal M(I)$ and $\mathcal M(J)$ are identical.
\end{corollary}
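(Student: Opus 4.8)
The plan is to deduce the corollary directly from \Cref{lem:minimal} together with the inductive definition of $\mathcal M(\cdot)$, without any new combinatorics on the piles. The first observation is that $I \Next J$ forces $I \in \mathcal M(J)$: indeed $J \in \mathcal M(J)$ by definition, and $I \Next J$ then places $I$ into $\mathcal M(J)$ by the closure rule. Since $\mathcal M(J)$ is closed under taking $\Next$-predecessors and $\mathcal M(I)$ is, by definition, the smallest such family containing $I$, we get $\mathcal M(I) \subseteq \mathcal M(J)$ (in fact $\mathcal M(I) \subsetneq \mathcal M(J)$ by the remark preceding the corollary, although the strictness is not needed).

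Next I would invoke \Cref{lem:minimal} to name the unique $\Next$-minimal set $M$ of $\mathcal M(I)$. The key point is that ``$\Next$-minimal'' is an absolute property of a set --- $M$ is $\Next$-minimal means there is \emph{no} $K \in \mathcal I$ with $K \Next M$ --- rather than a property relative to the ambient family. Consequently $M$, which lies in $\mathcal M(J)$ by the inclusion above, is a $\Next$-minimal set contained in $\mathcal M(J)$. Applying \Cref{lem:minimal} to $J$, the family $\mathcal M(J)$ has \emph{exactly one} $\Next$-minimal set, so that set must be $M$ itself. Hence the $\Next$-minimal sets of $\mathcal M(I)$ and of $\mathcal M(J)$ are both equal to $M$, which is the claim.

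I do not expect a genuine obstacle here: once the inclusion $\mathcal M(I) \subseteq \mathcal M(J)$ is in place, the statement is an immediate consequence of \Cref{lem:minimal}. If instead one wanted a self-contained argument mirroring the proof of \Cref{lem:minimal}, I would induct on $|\mathcal M(J)|$ (which is at least $2$, since $I \Next J$ implies $I \neq J$ and both lie in $\mathcal M(J)$): if $I$ is the unique $\Next$-predecessor of $J$ in $\mathcal I$, the claim is the induction hypothesis applied to $I$; otherwise $J$ has another predecessor $J'$, and \Cref{lem:com} lets one either compare $I$ and $J'$ directly or replace them by a common predecessor, after which induction identifies all the minimal sets involved. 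The only place requiring care along that alternative route is checking that every $\Next$-predecessor of $J$ lies in a family strictly smaller than $\mathcal M(J)$, which is precisely the monotonicity noted just before the corollary.
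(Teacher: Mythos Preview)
Your argument is correct and matches the paper's approach: the paper simply declares the corollary an immediate consequence of \Cref{lem:minimal}, and your unpacking via the inclusion $\mathcal M(I) \subseteq \mathcal M(J)$ together with the absoluteness of ``$\Next$-minimal'' is exactly the intended one-line deduction. The alternative inductive sketch you offer is also faithful to the proof of \Cref{lem:minimal}, but unnecessary given the direct argument.
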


We define an equivalence relation on $\mathcal I$ based on the $\Next$-minimality.
By~\Cref{lem:minimal}, the $\Next$-minimal set in $\mathcal M(I)$ is uniquely determined for $I \in \mathcal I$.
We say that two maximum feasible sets $I$ and $J$ are \emph{$\Next$-equivalent} if the $\Next$-minimal set in $\mathcal M(I)$ is equal to that in $\mathcal M(J)$.
The key to our algorithm is the following lemma.

\begin{lemma}\label{lem:main}
    Let $I, J \in \mathcal I$.
    Then, there is a reconfiguration sequence between $I$ and $J$ if and only if $I$ and $J$ are $\Next$-equivalent.
\end{lemma}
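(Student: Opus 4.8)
\emph{Plan.} I would prove the two implications separately; thanks to \Cref{cor:minimal} and the pile structure recorded in \Cref{obs:pile}, almost all of the structural work is already available, and what remains is mostly bookkeeping.

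\emph{From $\Next$-equivalence to a reconfiguration sequence.} Suppose $I$ and $J$ are $\Next$-equivalent, and let $K$ be the common $\Next$-minimal set of $\mathcal M(I)$ and $\mathcal M(J)$. The first observation is that a single $\Next$-step is a legal reconfiguration move: if $I' \Next I$, then $I' \setminus I = \{u\}$ and $I \setminus I' = \{v\}$ are singletons, so $I' = (I \cup \{u\}) \setminus \{v\}$, and both $I, I' \in \mathcal I$ are feasible. Unwinding the inductive definition of $\mathcal M(I)$, the membership $K \in \mathcal M(I)$ produces a chain $I = K_0, K_1, \ldots, K_s = K$ with $K_{t+1} \Next K_t$ for all $t$; by the previous observation this is a reconfiguration sequence from $I$ to $K$, and symmetrically there is one from $J$ to $K$. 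Concatenating the first with the reverse of the second (reconfiguration sequences being reversible) yields a reconfiguration sequence between $I$ and $J$.

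\emph{From a reconfiguration sequence to $\Next$-equivalence.} Let $(I_0, \ldots, I_\ell)$ be a reconfiguration sequence with $I_0 = I$ and $I_\ell = J$. Every $I_p$ has the same cardinality as $I$ and $J$, hence $I_p \in \mathcal I$. The local claim is that consecutive sets are $\Next$-related in one direction or the other: writing $I_p \setminus I_{p+1} = \{x\}$ and $I_{p+1} \setminus I_p = \{y\}$, \Cref{obs:pile} places $a_x$ and $a_y$ in a common pile $P_i$, and since $x \neq y$ they occupy distinct positions of that pile, with $i \ge 1$ because index $0$ lies in every maximum feasible set and therefore never moves. If $a_x$ is below $a_y$ then $I_p \Next I_{p+1}$, otherwise $I_{p+1} \Next I_p$; either way \Cref{cor:minimal} shows $\mathcal M(I_p)$ and $\mathcal M(I_{p+1})$ have the same $\Next$-minimal set, i.e.\ $I_p$ and $I_{p+1}$ are $\Next$-equivalent. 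Since $\Next$-equivalence is an equivalence relation, $I_0$ and $I_\ell$ are $\Next$-equivalent.

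\emph{Main obstacle.} I do not expect a genuine difficulty, since the real content lives in \Cref{lem:com}, \Cref{lem:minimal}, and \Cref{cor:minimal}. The one point deserving care is establishing that $\Next$-steps and reconfiguration steps coincide on $\mathcal I$: one inclusion is immediate from the definition of $\Next$, while the other relies on \Cref{obs:pile} to force the two swapped indices into the same pile and on the elementary remark that two distinct elements of a pile sit at distinct heights, which fixes the direction of the $\Next$-relation. The remaining ingredient, turning $K \in \mathcal M(I)$ into an explicit chain of $\Next$-steps, is a routine unwinding of the inductive definition of $\mathcal M(\cdot)$.
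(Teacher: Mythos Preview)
Your proposal is correct and follows essentially the same route as the paper: use \Cref{cor:minimal} to propagate $\Next$-equivalence along a reconfiguration sequence, and conversely descend from each of $I$, $J$ to the common $\Next$-minimal set via $\Next$-steps (which are valid reconfiguration moves) and concatenate. If anything, your write-up is more careful than the paper's in one spot---where the paper asserts ``by definition, either $I_i \Next I_{i+1}$ or $I_{i+1} \Next I_i$'', you correctly invoke \Cref{obs:pile} to force the two swapped indices into the same pile before concluding which direction of $\Next$ holds.
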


\begin{proof}
    Suppose that there is a reconfiguration sequence $(I_0, I_1, \ldots, I_\ell)$ between $I_0 = I$ and $I_\ell = J$.
    We prove that all maximum feasible sets $I_i$ belong to the same $\Next$-equivalence class.
    By definition, either $I_{i} \Next I_{i + 1}$ or $I_{i + 1} \Next I_{i}$, implying respectively that $\mathcal M(I_i) \subseteq \mathcal M(I_{i + 1})$ or $\mathcal M(I_{i + 1}) \subseteq \mathcal M(I_{i})$.
    By~\Cref{cor:minimal}, their $\Next$-minimal sets are identical, which proves the forward direction.
    
    Suppose that $I$ and $J$ are $\Next$-equivalent.
    Then, there is $I' \in \mathcal M(I) \cap \mathcal M(J)$.
    This implies that there are reconfiguration sequences between $I$ and $I'$ and between $J$ and $I'$.
    By concatenating these sequences, we have a reconfiguration sequence between $I$ and $J$.
\end{proof}

Our algorithm is fairly straightforward.
Given two maximum feasible sets $I$ and $J$, we compute their $\Next$-minimal sets $I'$ and $J'$, respectively.
By~\Cref{lem:main}, there is a reconfiguration sequence between $I$ and $J$ if and only if $I' = J'$.
From a maximum feasible set $I$, we can compute a unique $\Next$-minimal set in $\mathcal M(I)$ in polynomial time by a greedy algorithm.
Hence, \Cref{thm:main} follows.

\section{Bipartite case}

Before proving \Cref{thm:bp}, we would like to mention that bipartiteness in \Cref{thm:bp} is crucial, that is, 
\prb{Longest Increasing Subsequence Reconfiguration} does not admit a reconfiguration sequence of length $|I \setminus J|$ in general.
Let us consider an instance consisting of $A = (15, 11, 16, 13, 17, 12, 14)$\footnote{Again, we use integers more than $n$ for the elements in $A$ to avoid confusion.}, $I = \{1,3,5\}$, and $J = \{2,6,7\}$.
This instance requires four steps to transform $I$ into $J$: $I_0 = \{1, 3, 5\} = I$, $I_1 = \{2,3,5\}$, $I_2 = \{2,4,5\}$, $I_3 = \{2,4,7\}$, $I_4 = \{2,6,7\} = J$, while $|I\setminus J| = 3$.

Let $(A = (a_i)_{i=1,2,\ldots, n}, I, J)$ be an instance of \prb{Longest Increasing Subsequence Reconfiguration} such that the underlying permutation graph $G_A$ of $A$ is bipartite.
In the following, we may not distinguish the elements of $A$ from their indices and then also refer to the elements of $A$ as the vertices of $G_A$.
Let $P_1, P_2, \ldots, P_k$ be the piles for $A$ defined in the previous section.
By (1) in \Cref{obs:pile}, every pair of indices of elements in a pile is incomparable with respect to $\preceq_A$.
This implies that they are adjacent in the permutation graph $G_A$.
Thus, each pile contains at most two elements as otherwise $G_A$ contains a triangle.
A pile $P_t$ is called a \emph{mixed pile} if it contains exactly two elements $a_{i}$ and $a_{j}$ with $i \in I$ and $j \in J$.
Note that, for such a mixed pile $P_t$, both $j \notin I$ and $i \notin J$ hold.
A pair of two mixed piles is called a \emph{forbidden pair} if the four vertices corresponding to two mixed piles induce a cycle of length $4$ in $G_A$.
It is easy to observe that $(A, I, J)$ is a no-instance if it has a forbidden pair.

A mixed pile $P_i$ is called the \emph{leftmost} mixed pile if no pile $P_j$ with $j < i$ is mixed.
The following lemma is a key to proving \Cref{thm:bp}.
\begin{lemma}\label{lem:mixed-pile}
    Suppose that $(A, I, J)$ has no forbidden pairs.
    Let $a_i, a_j$ be the elements in the leftmost mixed pile $P_{t}$ with $i \in I$ and $j \in J$.
    Then, at least one of $(I \setminus \{i\}) \cup \{j\}$ or $(J \setminus \{j\}) \cup \{i\}$ is feasible.
\end{lemma}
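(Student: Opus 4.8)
The plan is to argue by contradiction: assume that neither $(I \setminus \{i\}) \cup \{j\}$ nor $(J \setminus \{j\}) \cup \{i\}$ is feasible, and derive the existence of a forbidden pair. First I would exploit the fact that $I$ and $J$ are themselves feasible, so any infeasibility in $(I \setminus \{i\}) \cup \{j\}$ must be caused by the single new element $j$; that is, there exists some $i' \in I \setminus \{i\}$ that is incomparable (with respect to $\preceq_A$) to $j$. Likewise, infeasibility of $(J \setminus \{j\}) \cup \{i\}$ yields some $j' \in J \setminus \{j\}$ incomparable to $i$. Since $P_t$ is the \emph{leftmost} mixed pile, the element $i'$ lies in a pile $P_s$ with $s > t$ (it cannot lie in $P_t$, which already contains exactly $i$ from $I$, and it cannot lie in a pile to the left of $P_t$ since then, by (3) of \Cref{obs:inc} applied within $I$, $i'$ and $i$ would be comparable and $i'$ would be comparable to $j$ through $i$ — here I would have to check the direction of the inequalities carefully). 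Symmetrically $j'$ lies in a pile $P_{s'}$ with $s' > t$.

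The core of the argument is then to pin down the pile containing $i'$ and $j'$ and to show these four vertices $i, j, i', j'$ induce a $C_4$. The natural claim is that $i'$ and $j'$ lie in the \emph{same} pile, and moreover that pile is mixed. To see they share a pile: using \Cref{obs:pile}-style reasoning together with the incomparabilities just established and the comparabilities forced by $i, i' \in I$ and $j, j' \in J$, one shows that $i'$ and $j'$ are incomparable to each other, hence (being incomparable) adjacent in $G_A$, hence in a common pile (two mutually incomparable elements of a minimum antichain partition produced by patience sorting must in fact be in the same pile — this needs the structure from \Cref{obs:inc}). Since $i' \in I$ and $j' \in J$ with $i' \neq j'$ (they are incomparable), that pile is mixed. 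Then $\{i, j\}$ and $\{i', j'\}$ are two mixed piles, and the four edges $ij$, $i'j'$, $i'j$, $ij'$ are all present (the first two because each pile is an antichain, the last two by the incomparabilities we assumed), while $ii'$ and $jj'$ are non-edges because $i, i' \in I$ and $j, j' \in J$ are comparable. That is exactly a $C_4$, i.e., a forbidden pair, contradicting the hypothesis.

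The main obstacle I anticipate is the bookkeeping of the $\preceq_A$ inequalities and pile orderings in the step that forces $i'$ and $j'$ into the same pile — it requires combining all three parts of \Cref{obs:inc} with the two assumed incomparabilities in just the right way, and getting the directions of $<$ versus $>$ correct (the interplay between index order and value order). A secondary subtlety is ruling out degenerate cases: that $i'$ could equal $j$, or $j'$ could equal $i$, or that $i'$ and $j'$ could coincide; each of these is dispatched quickly (e.g. $i' = j$ is impossible since $i' \in I$ and $j \notin I$), but they must be stated. Once the two mixed piles are identified, verifying the $C_4$ is immediate from the definitions, so the weight of the proof really sits in the structural "same pile" claim.
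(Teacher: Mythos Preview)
Your overall framework matches the paper's: assume both swaps are infeasible, extract a witness $i'\in I\setminus\{i\}$ incomparable to $j$ and a witness $j'\in J\setminus\{j\}$ incomparable to $i$, observe via the leftmost hypothesis that both witnesses sit in mixed piles strictly to the right of $P_t$, and aim for a forbidden pair. Your intermediate claim that $i'$ and $j'$ are themselves incomparable is even correct (if $i'\preceq_A j'$ then $i\preceq_A i'\preceq_A j'$ contradicts the incomparability of $i$ and $j'$, and symmetrically for the other direction).

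The gap is the step from ``$i'$ and $j'$ are incomparable'' to ``$i'$ and $j'$ lie in the same pile.'' The parenthetical principle you invoke --- that two incomparable elements in the patience-sorting partition must share a pile --- is false in general: for $A=(2,3,1)$ the piles are $\{a_1,a_3\}$ and $\{a_2\}$, and indices $2,3$ are incomparable but in different piles (and this $G_A$ is even bipartite). Nothing in \Cref{obs:inc} yields such an implication. The paper does not try to force $i'$ and $j'$ into one pile via incomparability; instead it names their piles $P_{t'}$ and $P_{t''}$ and splits into cases. If $t'=t''$ then $P_t$ and $P_{t'}$ form a forbidden pair, exactly as in your final paragraph. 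If $t'\neq t''$, say $t<t'<t''$, the paper runs a short chain of value inequalities: from the edges $j\sim i'$ and $i\sim j'$ one gets $a_j>a_{i'}$ and $a_i>a_{j'}$, and combining with $a_j<a_{j'}$ (both in $J$) gives $a_{i'}<a_j<a_{j'}<a_i$, contradicting $a_i<a_{i'}$ (both in $I$). This inequality case is precisely the ``bookkeeping of the $\preceq_A$ inequalities and pile orderings'' you flagged as the main obstacle, and it cannot be bypassed by the general same-pile claim.
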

\begin{proof}
    Suppose that both $I' = (I \setminus \{i\}) \cup \{j\}$ and $J' = (J \setminus \{j\}) \cup \{i\}$ are not feasible.
    As $I'$ is not feasible, there is $i' \in I \setminus \{i\}$ that is adjacent to $j$ in $G_A$.
    Let $P_{t'}$ be the pile containing $a_{i'}$.
    Since $j \in J$, pile $P_{t'}$ has an element $a_{j'}$ with $j' \in J$, which implies that $P_{t'}$ is a mixed pile with $t < t'$.
    Symmetrically, as $J'$ is not feasible, there is a mixed pile $P_{t''}$ with $t < t''$ that has an element $a_{j''}$ with $j'' \in J \setminus \{j\}$ adjacent to $i$ in $G_A$.
    If $t' = t''$, the pair $P_t$ and $P_t'$ forms a forbidden pair, contradicting the assumption.
    Assume, without loss of generality, that $t < t' < t''$.
    Since there are edges between $j$ and $i'$ and between $i$ and $j''$, we have $a_j > a_{i'}$ and $a_i > a_{j''}$.
    As $j, j'' \in J$, we have $a_{j} < a_{j''}$.
    Thus, we have $a_{i'} < a_j < a_{j''} < a_i$, contradicting to the fact $a_{i} < a_{i'}$ as $i, i' \in I$.
\end{proof}

By~\Cref{lem:mixed-pile}, at least one of $(I \setminus \{i\}) \cup \{j\}$ or $(J \setminus \{j\}) \cup \{i\}$, say $I' = (I \setminus \{i\}) \cup \{j\}$, is feasible.
This decreases the difference $|I'\setminus J|$ by $1$ and does not create a new forbidden pair.
Applying repeatedly this, \Cref{thm:bp} follows.



\begin{sloppypar}
\printbibliography
\end{sloppypar}

\end{document}